\documentclass{article}

\usepackage[utf8]{inputenc}
\usepackage[T1]{fontenc}
\usepackage{amsmath,amssymb,amsthm}
\usepackage{stmaryrd}
\usepackage[table]{xcolor}
\usepackage{mathtools}
\usepackage{hyperref}
\usepackage{cleveref}
\usepackage{booktabs}
\usepackage{enumitem}
\usepackage{listings}
\usepackage{xcolor}
\usepackage[margin=1in]{geometry}
\usepackage{braket}
\usepackage{microtype}
\usepackage{authblk}
\usepackage{blkarray}
\usepackage{thmtools}
\usepackage{tikz-cd}
\usepackage{physics}
\usepackage[most]{tcolorbox}

\newtcolorbox{mybox}[1][]{
    breakable,                
    colback=blue!5!white,      
    colframe=blue!75!black,    
    fonttitle=\bfseries\large,
    title={#1},       
    enhanced,                  
    attach boxed title to top left={yshift=-2mm, xshift=2mm},
    boxed title style={colback=blue!75!black},
    #1                         
}

\DeclareMathOperator{\Fun}{Fun}

\lstdefinelanguage{Lean}{
  morekeywords={def,theorem,lemma,structure,where,abbrev,noncomputable,let,fun,if,then,else,by,sorry,import,variable,instance,section,end,open,set_option,example,class,inductive,namespace,match,with,do,return,have,show,calc,suffices,obtain,rcases,refine,exact,apply,intro,intros,rw,simp,unfold,dsimp,constructor,cases,induction},
  sensitive=true,
  morecomment=[l]{--},
  morecomment=[s]{/-}{-/},
  morestring=[b]",
  literate={:=}{$\coloneqq$}2
}

\newcommand{\Draft}{}

\ifdefined\Draft
\newcommand{\Ethan}[1]{{\color{magenta} [Yi: #1]}}
\newcommand{\RT}[1]{{\color{red} [Runzhou: #1]}}

\else
\newcommand{\Ethan}[1]{}
\newcommand{\RT}[1]{}
\newcommand{ME}[1]{}
\fi

\newcommand{\bbN}{\mathbb{N}}
\newcommand{\bbU}{\mathbb{U}}
\newcommand{\bbP}{\mathbb{P}}
\newcommand{\PauliFold}{\mathsf{PauliFold}}
\DeclareMathOperator{\Image}{Image}

\newtheorem{theorem}{Theorem}[section]
\newtheorem{lemma}[theorem]{Lemma}

\theoremstyle{definition}
\newtheorem{definition}[theorem]{Definition}

\newcommand\doubleplus{+\kern-1.3ex+\kern0.8ex}

\title{\textbf{End-to-End Formalization of Quantum Error Correction}}
\author[ ]{Mattias Ehatamm}
\author[ ]{Yi Lee}
\author[ ]{Xiaodi Wu}
\author[ ]{Runzhou Tao}
\affil[ ]
{Joint Center for Quantum Information and Computer Science, University of Maryland, College Park}
\affil[ ]{\texttt{\{mehatamm, ylee1228, xiaodiwu, rztao\}@umd.edu}}

\date{}

\begin{document}
\maketitle

\begin{abstract}
Quantum error-correcting codes (QECCs) sit between noisy quantum hardware and
reliable computation, so the code parameters used in practice must be
trustworthy. The single number that summarizes a code's strength is its
\emph{distance}, yet certifying a distance lower bound is NP-hard in general,
placing it beyond the reach of pen-and-paper proofs as well as direct
proof-assistant scripting. As a result, distance values in the literature
come either from non-scaling hand proofs, or from
unverified solvers that leave a trust gap exactly where the code is supposed to
provide a guarantee.
We present \textsc{Lean-QEC}, the first Lean 4 formalization of stabilizer-code
theory that delivers end-to-end, machine-checked distance certificates at
industrial code sizes. \textsc{Lean-QEC} formalizes the linear algebra of qubit
states, the Pauli group, stabilizer codes, the binary symplectic representation,
classical coding theory, and the CSS and Bivariate Bicycle families. To break the combinatorial barrier, \textsc{Lean-QEC} translates the distance
condition into a Boolean satisfiability formula through a verified reduction. The pipeline scales through a \textsc{BitVec}-flattened
encoding that replaces Lean's \textsc{Matrix} representation, and an
error-\emph{location} encoding that reduces the variable count from $n$ to
$k\lceil \log_2 n\rceil$. With these, we obtain automatically-generated Lean-checked distance proofs for a large range of industrially viable qLDPC codes within the Bivariate Bicycle and Generalized Bicycle families, including $\llbracket 90, 8, 10 \rrbracket$ and $\llbracket 70, 6, 9 \rrbracket$ BB codes, with the formulation scaling up to 144 qubits when performed outside the Lean kernel. The resulting library is reusable and is designed to
plug into broader Lean-based efforts toward end-to-end verification of
fault-tolerant quantum computation.
\end{abstract}

\section{Introduction}\label{sec:intro}

Quantum computing promises asymptotic speedups on classically intractable
problems in cryptography, simulation, optimization, and
chemistry~\cite{shor1999polynomial, berry2007efficient, mcardle2020quantum,
harrow2009quantum}. Realizing those speedups on real hardware, however, is
constrained by noise: every physical qubit decoheres and every physical gate is
imperfect~\cite{Preskill_2018}, so the only known route to scalable computation
passes through a thick layer of \emph{quantum error correction} (QEC). Recent
experimental demonstrations of QEC on superconducting and trapped-ion
platforms~\cite{google2023suppressing, postler2024demonstration,
google2025quantum} have made this layer a load-bearing component of the quantum
stack rather than a theoretical curiosity, and have accordingly raised the bar
for the trust placed on it.

The classical engineering response to such load-bearing code is testing, but
testing is essentially unavailable for quantum programs. Output distributions
are exponentially expensive to sample, intermediate measurements collapse the
very state under inspection, and the noise that motivates correction also
corrupts any would-be observation. The remaining option is to \emph{formally
verify} the QEC layer~\cite{demoura2021, mathlib2020}: state its guarantees as
theorems and discharge them in a proof assistant whose kernel can be
independently trusted.

\paragraph{Why the distance number is hard to trust.} The most informative
single parameter of a stabilizer code is its \emph{distance} $d$: the minimum
Pauli weight of an undetectable error. A code with distance $d$ corrects
$\lfloor(d-1)/2\rfloor$ errors, so distance is the value most often quoted when
codes are compared and the value on which fault-tolerance thresholds depend. Yet
certifying a distance lower bound is computationally hard: even for classical
linear codes the problem is NP-complete~\cite{vardy2002intractability}, and for
stabilizer codes lacking strong symmetry the situation inherits the same
intractability \cite{Kapshikar_2023}. The values quoted in the literature therefore come from one of
two unsatisfying places:
\begin{itemize}[leftmargin=1.5em]
    \item \textbf{Hand proofs inside a proof assistant.} Existing developments
    such as the 9-qubit Shor-code verification~\cite{feng-9-qubits} enumerate
    every Pauli error explicitly. The bookkeeping --- $3n$ single-qubit error
    patterns plus all of their combinations --- has so far prevented this style
    from scaling past nine qubits, well short of any code used in current
    experiments.
    \item \textbf{Unverified symbolic computation.} ILP solvers, custom search
    routines, or computer-algebra scripts can in principle handle larger codes~\cite{webster2026distancefindingalgorithmsquantumcodes}, 
    but their outputs sit entirely outside the proof kernel. A distance value
    pasted from such a tool is the moral equivalent of a numerical constant
    trusted on faith.
\end{itemize}
Neither option is end-to-end: the first does not scale, and the second leaves a
trust gap at exactly the point the code is meant to provide a guarantee.

\paragraph{Why end-to-end matters.} A pragmatic shortcut is to declare a
stabilizer code by its binary symplectic representation and treat distance as a
property of that matrix. This makes the math tractable, but it shifts the trust
to an informal equivalence between two definitions --- the matrix view and the
original quantum-theoretic view in which a stabilizer code is the simultaneous
$+1$ eigenspace of a set of Pauli matrices. A genuinely trustworthy artifact
must close that loop inside the proof kernel: distance bounds proved over the
matrix view must transport, by a machine-checked theorem, to the eigenspace
view.

\paragraph{Our approach.} We present \textsc{Lean-QEC}, a Lean~4 library that
builds the stabilizer-code stack end-to-end and discharges distance certificates
against industrial-scale codes by combining the proof assistant with an external
SAT solver. The library covers the linear algebra of qubit states, the Pauli
group with phases, stabilizer codes, the binary symplectic representation,
classical coding theory, CSS codes, and the Bivariate Bicycle (BB) family. Each
abstraction layer is connected to the next by a machine-checked correspondence
theorem, so a distance bound established on a binary symplectic matrix is also a
theorem about the corresponding quantum code.

For the computationally hard step, \textsc{Lean-QEC} reduces ``distance $\geq
d$'' to the unsatisfiability of a Boolean formula via a Lean-verified
translation, hands the formula to CaDiCaL through Lean's \lstinline{bv_decide}
tactic, and reconstructs the resulting LRAT certificate as a Lean proof term.
The trusted base remains Lean's kernel plus the LRAT checker; the SAT solver
itself is not trusted.

Making this pipeline scale required two non-obvious design moves, both of which
we present as contributions in their own right:
\begin{enumerate}[leftmargin=1.5em]
    \item Instantiating the Boolean formula directly through Mathlib's
    \lstinline{Matrix} type exhausts memory well before reaching the
    $[[90,8,10]]$ BB code. We introduce a \lstinline{BitVec}-flattened
    intermediate representation, prove it consistent with the
    \lstinline{Matrix}-based reference, and use it to carry the encoding past the 90-qubit mark.
    \item Even with an efficient encoding, a one-Boolean-per-qubit error
    variable produces a formula whose solver runs become intractable beyond
    about 90 qubits. We instead let the variables encode the \emph{locations} of
    the (at most $d-1$) nonzero error entries, shrinking the variable count from
    $n$ to $k\lceil\log_2 n\rceil$. For the $[[144,12,12]]$ BB code this is $88$
    variables versus $144$ --- a difference that makes 144 qubits tractable
    outside the kernel in minutes.
\end{enumerate}

\paragraph{Contributions.} Our contributions are as follows.
\begin{enumerate}[leftmargin=1.5em]
  \item \textbf{An end-to-end Lean~4 formalization of stabilizer-code theory.}
  We give a machine-checked development that connects qubit states, the Pauli
  group with phases, stabilizer codes, the binary symplectic representation, and
  classical coding theory, with explicit correspondence theorems between
  adjacent layers (\Cref{sec:QECBackground}).

  \item \textbf{A verified reduction from code distance to Boolean
  satisfiability.} We formalize a translation from the stabilizer-theoretic
  distance condition to an UNSAT query, prove the reduction sound in Lean, and
  integrate it with Lean's \lstinline{bv_decide} tactic so the SAT certificate
  is reconstructed in the kernel (\Cref{sec:smt}).

  \item \textbf{Two scalability techniques.} A \lstinline{BitVec}-based
  intermediate representation, verified consistent with the \lstinline{Matrix}
  encoding, lets the formula instantiation reach industrial code sizes; a
  location-indexed error encoding cuts the SAT variable count from $n$ to
  $k\lceil\log_2 n\rceil$ (\Cref{sec:translation}).

  \item \textbf{Lean-checked distance certificates for practical codes.} We
  instantiate the pipeline on industrially viable codes like the $[[90,8,10]]$ Bivariate Bicycle code of \cite{bravyi2024} and the $[[70, 6, 9]]$ BB code of \cite{tripier2026faulttolerantquantumcomputingtrapped}, and report the
  resource envelope outside Lean for the Gross Code of size $[[144,12,12]]$. Each case study is between
  $100$ and $200$ lines of Lean script and follows a uniform template
  (\Cref{sec:casestudies}), which may be automatically generated by a Python script given the code description.
\end{enumerate}

\paragraph{Comparison.} \Cref{tab:comparison} summarizes how \textsc{Lean-QEC}
sits relative to the two existing options. Hand-written ITP proofs of distance
close the trust loop but stall well below twenty qubits; unverified solver
pipelines reach industrial codes but never enter the kernel. \textsc{Lean-QEC}
closes the loop \emph{and} reaches industrial codes by paying for the solver
call with a verified reduction and a reconstructed certificate.

\begin{figure}[t]
    \centering
    \vspace{-1.25cm}
    \includegraphics[width=0.9\linewidth]{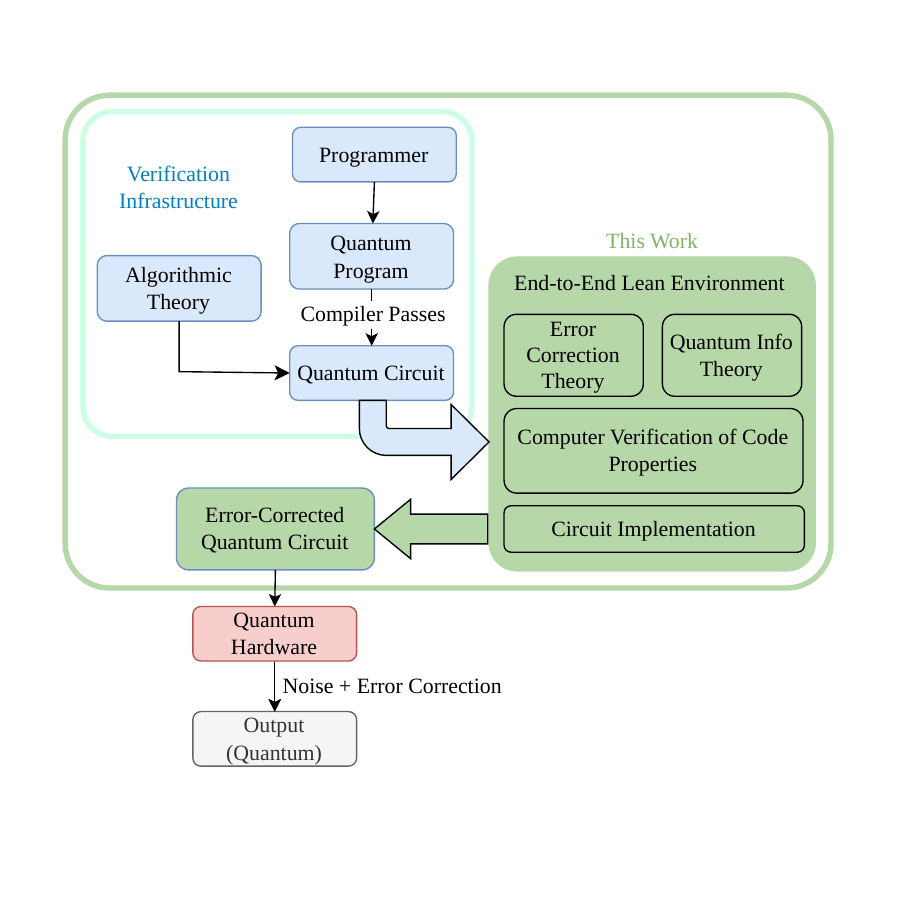}
    \vspace{-2cm}
    \caption{Verification workflow for fault-tolerant quantum programs.
    \textsc{Lean-QEC} occupies the QEC layer of this pipeline.}
    \label{fig:qflow}
\end{figure}

\begin{table}[ht]
\small
\centering
\caption{Comparison of verification approaches for QEC
distance.}\label{tab:comparison}
\begin{tabular}{@{}lccc@{}}
\toprule
\textbf{Approach} & \textbf{Theory verified?} & \textbf{Scales to large codes?} & \textbf{End-to-end checked?} \\
\midrule
Hand-written ITP proofs         & Yes & No ($n > 20$ prohibitive) & Yes \\
Unverified symbolic computation & No  & Yes                       & No  \\
\textsc{Lean-QEC} (this work)   & Yes & Yes (${\sim}2^{40}$ cases) & Yes \\
\bottomrule
\end{tabular}
\end{table}

The rest of the paper develops \textsc{Lean-QEC} in the order a quantum
researcher would encounter the abstractions: \Cref{sec:fv} introduces just
enough Lean to follow the formalization; \Cref{sec:QECBackground} builds the QEC
stack with explicit attention to the type-theoretic gaps that arise when
formalizing standard quantum-information arguments; \Cref{sec:smt} presents the
SAT-assisted distance pipeline; \Cref{sec:casestudies} reports the case studies;
and \Cref{sec:Conclusion and Discussion} discusses related work and limitations.

\section{Background}\label{sec:setting}
\label{sec:fv}

This section gives some background Lean and formal verification for
\Cref{sec:QECBackground,sec:smt}. We assume familiarity with quantum error
correction and standard linear algebra.
For more details please refer to \cite{avigad2020mathematics}.

\paragraph{Lean 4 and Mathlib.} Lean~4~\cite{demoura2021} is an interactive
theorem prover (ITP) and a dependently typed functional programming language
whose underlying logic is the Calculus of Inductive Constructions. Statements
and proofs are both first-class objects: a proof of a proposition $P$ is a value
whose type is $P$. Once Lean has accepted a proof, the only assumptions a reader
needs to discharge are (i)~that the Lean statement faithfully captures the
intended informal claim and (ii)~that Lean's small, independently audited kernel
is correct.

The standard library \emph{Mathlib}~\cite{mathlib2020} now contains close to two
million lines of formalized mathematics and underpins large efforts such as the
Prime Number Theorem~\cite{Kontorovich_Prime_Number_Theorem_2024} and Fermat's
Last Theorem~\cite{FLT_Lean} projects. We rely on Mathlib for matrices,
finite-dimensional vector spaces, group theory, unitary groups, and the
Kronecker product; conversely, our development is designed to be portable back
into Mathlib as quantum-information content matures there.

\paragraph{Formalizing Math with an ITP.} A Lean proof is a sequence of \emph{tactics} that incrementally transforms the goal, and any informal pen-and-paper argument can be translated by spelling out the steps the human reader would otherwise fill in. The friction sits in those omitted steps: edge cases that the prose dismisses with ``clearly,'' implicit coercions between mathematically equivalent representations, and definitions whose set-theoretic phrasing is awkward in type theory.

\begin{mybox}{\textbf{Example: Proving Distance in Lean}}

For a concrete Lean example we revisit a definitional edge case that arises in our own
development. The distance of a quantum error-correcting code is the minimum
Pauli weight of an undetectable error, but a minimum over the empty set is
undefined; we must pick a convention for codes whose only valid codeword is
$0^n$. We follow the standard ``$n+1$'' convention, which makes the distance
well-defined while reserving the literal value $n+1$ as a sentinel for the
trivial case.

\textbf{Definition:} Given an error-correcting code $\mathsf{C}$, let $S\subseteq \mathbb{P}_n$
    be the set of undetectable Pauli errors.
    We then define the distance by
    \begin{equation}
        d = \begin{cases}
            \min_{P\in S}(\mathsf{weight}(P)) & \textnormal{if } S\ne\emptyset \\
            n+1 & \textnormal{if } S = \emptyset
        \end{cases}
    \end{equation}

The sentinel value makes the following sanity check immediate, but only
\emph{after} the case split is made explicit.

\textbf{Lemma:} If the distance $d$ of a code over $n$ qubits satisfies $d \leq n$, then its
    set $S$ of undetectable errors is nonempty.
\begin{proof}
    Suppose $S = \emptyset$ for contradiction. By the definition of code distance, we
    have $d = n+1$, contradicting $d \leq n$.
\end{proof}

A corresponding Lean proof mirroring step-by-step the above informal argument would be written as:

\begin{lstlisting}
lemma BinSympMatrix.undetectable_set_nonempty {k n : ℕ}
    (B : BinSympMatrix k n) (hk : 0 < k)
    (h_dist : B.distance hk ≤ n) :
  B.undetectable_set.Nonempty := by
  by_contra h_empty
  rw [Finset.not_nonempty_iff_eq_empty] at h_empty
  unfold BinSympMatrix.distance at h_dist
  rw [h_empty] at h_dist
  simp at h_dist
\end{lstlisting}    

This proof may be read as follows: \lstinline{by_contra} introduces the
contradiction hypothesis; \lstinline{Finset.not_nonempty_iff_eq_empty} converts
``non-emptiness'' into a Finset-level equation that Mathlib has lemmas about;
\lstinline{unfold} and \lstinline{rw} apply the definition and the empty-set
hypothesis; \lstinline{simp} discharges the residual arithmetic against $d \le
n$. The mechanical overhead, which involves naming the empty-set lemma and finding the right lemmas to rewrite with, is representative of what formalizing a pen-and-paper QEC
argument feels like in practice.

\end{mybox}

\section{Formalization of Quantum Error Correction}\label{sec:QECBackground}

\begin{figure}[!b]
    \centering
    \includegraphics[width=\linewidth]{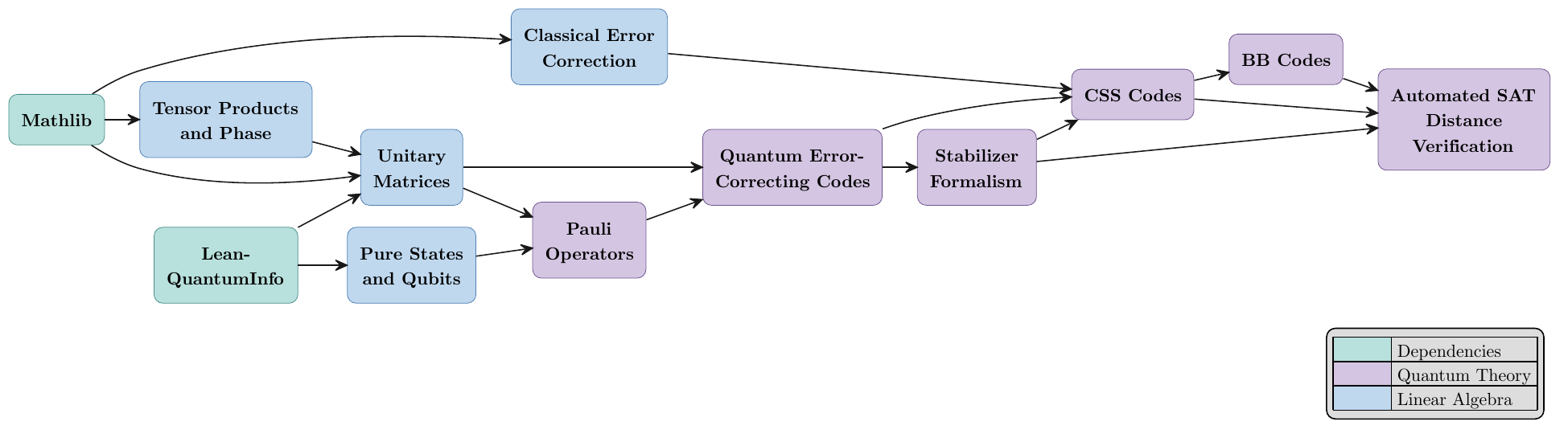}
    \caption{Map of the formalized quantum-error-correction stack. Arrows are
    machine-checked correspondence theorems; the rightmost layer feeds into the
    SAT-based distance pipeline of \Cref{sec:smt}.}
    \label{fig:qmap}
\end{figure}

%

This section builds the stack from \Cref{fig:qmap} bottom-up: qubit states,
$n$-qubit unitaries, the Pauli group with phases, the binary symplectic
representation, abstract QEC codes, stabilizer codes, and finally the CSS and
Bivariate Bicycle families. For each layer we describe the formalization choice,
the machine-checked correspondence to the layer below, and the type-theoretic
friction that distinguishes the Lean development from a pen-and-paper account.
The chain of correspondences ending at the binary symplectic matrix is what
makes the SAT-based distance pipeline of \Cref{sec:smt} end-to-end: a distance
bound discharged at the bottom transports, by a Lean theorem, to a bound on the
quantum code at the top.

\subsection{Qubit States}
\label{sec:qc}

The SQIR development~\cite{hietala2021} and CoqQ~\cite{coqq} formalize quantum
states in Rocq Prover, and the Lean-QuantumInfo project~\cite{leanquantuminfo}
formalizes general finite-dimensional quantum states in Lean. None of these
specialize the state space to qubits in the way we need, so we start the stack
ourselves.

\paragraph{Choice of index set.} A textbook quantum researcher writes a pure
state as
$\ket{\psi} = \sum_{x \in \{0,1\}^n} \alpha_x \ket{x}$,
i.e.\ as an assignment of amplitudes to bit strings. A textbook formalizer
writes it as an element of $\mathbb{C}^{2^n}$. The two views are isomorphic, but
Lean represents a finite-dimensional vector by its function form
\begin{equation}
    \label{eq:vec-equiv-fn}
    \mathbb{F}^n \cong \Fun(\bbN_n, \mathbb{F}), \qquad \bbN_n = \{0, 1, \ldots, n-1\},
\end{equation}
and the choice of index set is a load-bearing design decision: every later
definition is parametrized by it. We diverge from SQIR's $\mathbb{C}^{2^n}$
convention and adopt $\Fun(\{0,1\}^n, \mathbb{C})$, written
$\mathbb{C}^{\{0,1\}^n}$ with the usual normalization. The payoff is that
tensor products on basis states become string concatenation, $\ket{101} \otimes
\ket{001} = \ket{101001}$, instead of an arithmetic identity on bit-shifts; this
matters because Pauli operators are most naturally indexed by which qubit they
act on, not by a position in a flattened $2^n$-array.

\subsection{Quantum Operators}

For the purpose of QEC we do not need a full circuit DSL; we only need to
compose states and operators consistently. The single property we require is
\begin{equation}
\label{eq:kron-ustate}
(U_1 \otimes U_2)(\ket{\phi_1} \otimes \ket{\phi_2}) = (U_1 \ket{\phi_1}) \otimes (U_2 \ket{\phi_2}),
\end{equation}
together with associativity of $\otimes$ on operators. The Lean development for
a richer gate-level circuit calculus is left to future work.

\paragraph{Tensor vs.\ Kronecker.} \Cref{eq:kron-ustate} hides a vocabulary
issue. The product on states is a \emph{tensor product} of vectors; the product
on operators is a \emph{Kronecker product} of matrices. Mathlib distinguishes
them as two different functions, and the abstract isomorphism connecting the two
passes through tensor-algebra machinery we would otherwise have to drag through
every proof. We sidestep this by formalizing every ``$\otimes$'' in our
development as a Kronecker product, with vectors viewed as $n \times 1$ matrices
when needed. This makes \Cref{eq:kron-ustate} a direct corollary of a Mathlib
lemma rather than a tour through Mathlib's tensor-algebra hierarchy.

\paragraph{Index alignment.} Matrices in Mathlib are functions on index pairs,
\begin{equation}
    \label{eq:mat-eq-fn}
    \mathbb{F}^{m \times n} \cong \Fun(\bbN_m \times \bbN_n, \mathbb{F}),
\end{equation}
and Mathlib's Kronecker product is defined accordingly: for $M_1 \in
\mathbb{F}^{I_1 \times J_1}$ and $M_2 \in \mathbb{F}^{I_2 \times J_2}$,
\begin{equation}
    (M_1 \otimes M_2)((i_1, i_2), (j_1, j_2)) = M_1(i_1, j_1) \cdot M_2(i_2, j_2).
\end{equation}
For an $n_1$-qubit operator $U_1$ and an $n_2$-qubit operator $U_2$, the index
type produced by this definition is $\{0,1\}^{n_1} \times \{0,1\}^{n_2}$,
whereas the index type of an $(n_1+n_2)$-qubit operator is $\{0,1\}^{n_1+n_2}$.
The two are equal up to the isomorphism $\{0,1\}^{n_1} \times \{0,1\}^{n_2}
\cong \{0,1\}^{n_1+n_2}$ that concatenates bit strings, but Lean's dependent
type system does not coerce them implicitly. We discharge this gap by a one-time
concatenation/splitting bijection; every later lemma uses operators indexed by
$\{0,1\}^{n}$ uniformly.

\paragraph{Heterogeneous equality.} Associativity of $\otimes$ on operators, $(U_1 \otimes U_2) \otimes U_3 = U_1
    \otimes (U_2 \otimes U_3)$,
is well-formed on paper but not in Lean: the LHS has type indexed by
$(n_1+n_2)+n_3$ qubits and the RHS by $n_1+(n_2+n_3)$, and $\mathbb{N}$'s
addition is not associative \emph{definitionally}. Other formalizations work
around this with explicit cast functions sprinkled through every proof. Lean's
Mathlib provides a cleaner option, \emph{heterogeneous equality}: the equation
is stated between values of provably-equal types, and the type checker accepts
it once the type-level equality has been discharged. We use this idiom
throughout the development.

\paragraph{Unitarity.} Mathlib already supplies the unitary-group structure on
square matrices, and Lean-QuantumInfo uses it as well; we inherit it directly.
Wrapping our Kronecker-based operators in Mathlib's unitary group introduces no
additional conceptual challenges, only boilerplate that we have written once.

\subsection{Pauli Group}

Quantum errors in our setting are decomposed into Pauli operators, so the Pauli
group is the first non-trivial structure in the QEC stack. The standard $X^x
Z^z$ shorthand --- ``apply $X$ to qubits with $x_i = 1$ and $Z$ to qubits with
$z_i = 1$'' --- is convenient for pen-and-paper arguments: the tensor product
becomes string concatenation, $X^{x_1}Z^{z_1} \otimes X^{x_2}Z^{z_2} = X^{x_1 \|
x_2}Z^{z_1 \| z_2}$, and the matrix product becomes qubit-wise. The same
shorthand also underlies the binary symplectic representation of
\Cref{sec:binsymp}.

\paragraph{What the shorthand omits.} Two gaps prevent us from using the $X^x
Z^z$ form directly as a Lean definition.
\begin{itemize}
    \item It describes a \emph{recipe} for building a unitary one qubit at a
    time, not a single matrix in $\mathbb{U}_{\{0,1\}^n}$. The Lean development
    needs the global matrix to state and prove e.g.\ commutation lemmas.
    \item It ignores global phase. The standard Pauli group contains $Y = -iXZ$
    and the anticommutation identity $ZX = -XZ$ depends on the sign; neither is
    expressible without phases. Restricting to the multiplicative subgroup
    $\langle X, Z\rangle$ does not recover $Y$.
\end{itemize}
We close both gaps explicitly. We start from the single-qubit Pauli matrices and
define a \emph{folding} function that turns a phase factor and an $n$-tuple of
single-qubit Paulis into a single $n$-qubit unitary. The Pauli group is then the
image of this function.
\begin{definition}[Single-qubit Pauli gates]
    \label{def:pauli-1}
    We define the following four unitary matrices (acting on the basis
    $\set{\ket{0}, \ket{1}}$) as the single-qubit Pauli gates.
    $$I = \begin{pmatrix}
    1 & 0 \\ 0 & 1
 \end{pmatrix}\quad X = \begin{pmatrix}
    0 & 1 \\ 1 & 0
\end{pmatrix}\quad Y = \begin{pmatrix}
    0 & -i \\ i & 0
\end{pmatrix} \quad Z = \begin{pmatrix}
    1 & 0 \\ 0 & -1
\end{pmatrix}$$
\end{definition}
%

The folding step assembles an $n$-tuple of single-qubit Paulis together with one
of the four phase factors $\{1, i, -1, -i\}$ into a single $n$-qubit unitary.
\begin{definition}[Pauli folding]
    \label{def:pauli-folding}
    For each $n \in \bbN$, define
    $$\PauliFold_n : \{1, i, -1, -i\} \times \{I, X, Y, Z\}^n \rightarrow
    \bbU_{\{0,1\}^n}$$
    by
    \begin{equation}
        \label{eq:def-pauli-fold}
        \PauliFold_n(\gamma, \vec{P}) = \gamma \cdot \bigotimes_i P_i.
    \end{equation}
\end{definition}
\begin{definition}
    \label{def:pauli-group}
    The $n$-qubit \emph{Pauli group} is the image of $\PauliFold_n$:
    \begin{equation}
        \label{eq:def-pauli-group}
        \bbP_n = \Image(\PauliFold_n).
    \end{equation}
\end{definition}
This is equivalent to the textbook definition. A common alternative is to
quotient by global phase and work with equivalence classes; we prefer the
image-of-fold form because it lets us write down a single matrix rather than a
coset representative, which keeps later proofs computable.

\begin{figure}
    \centering
    \begin{tikzcd}[column sep=large, row sep=large]
    P,\, P' \arrow{r} \arrow{d}{\mathsf{unfold}} &
    P \otimes P' \arrow{d}{\mathsf{unfold}} \\
    (\gamma, \{P_i\}_i),\, (\gamma', \{P_i'\}_i) \arrow[r] &
    (\gamma\gamma', \{P_i\}_i \doubleplus \{P_i'\}_i)
    \end{tikzcd}
    \caption{Tensor product of unfolded Pauli operators}
    \label{fig:pauli-commute}
\end{figure}

We verify three properties of $\PauliFold$ that drive the rest of the
development.

\emph{First, $\PauliFold$ is a bijection.} Every Pauli operator therefore has a
unique $\gamma \cdot X^x Z^z$ form. This matters for the distance pipeline of
\Cref{sec:distproblem}: distance is defined as the minimum \emph{Pauli weight}
of an undetectable error, which is only well-defined when each operator has a
canonical decomposition.

\emph{Second, $\PauliFold$ respects products.} The diagram in
\Cref{fig:pauli-commute} commutes for $\otimes$ on operators, and an analogous
diagram commutes for ordinary matrix product once phase factors are collected.
These commutation theorems are what let later proofs argue at the $\gamma \cdot
X^x Z^z$ level even though the underlying definition lives at the matrix level.
As a byproduct we also obtain closure under multiplication.

\emph{Third, commutation reduces to a single bit.} Whether two Paulis
$X^{x_1}Z^{z_1}$ and $X^{x_2}Z^{z_2}$ commute or anticommute is what determines
whether an error is detectable by syndrome measurements
(\Cref{sec:distproblem}); we formalize the local check on the single-qubit case
first and then lift.
\begin{lemma}
    [Characterization of single-qubit Pauli commutation]
    We define the \emph{anti-commutation indicator} function as
    \begin{equation}
        \label{eq:pauli-pauli-phase}
        \phi(P, Q)=\begin{cases}
            0 & \textnormal{if } P=I\wedge Q=I\wedge P=Q \\
            1 & \textnormal{otherwise}
        \end{cases}
    \end{equation}
    For all $P, Q\in\set{I, X, Y, Z}$, we then have
    $$PQ=(-1)^{\phi(P, Q)} QP.$$
\end{lemma}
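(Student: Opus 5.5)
The statement quantifies over only the four-element set $\{I,X,Y,Z\}$ for each of $P$ and $Q$, so I would prove it by exhaustive case analysis over the $16$ ordered pairs, checking each resulting $2\times 2$ matrix identity directly.

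Before splitting cases, the definition of $\phi$ has to be read as intended. Taken literally, the conjunction $P=I\wedge Q=I\wedge P=Q$ holds only for $(I,I)$. That would give $\phi(X,X)=1$ and hence $XX=-XX$, which is false. I would therefore state and prove the lemma with the disjunction $P=I\vee Q=I\vee P=Q$. This is the standard characterization: two single-qubit Paulis commute exactly when one of them is the identity or they are equal. In the Lean version I would encode $\phi$ by pattern matching on the inductive type of Pauli labels, so the intended cases are explicit.

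The cases then split into two groups.
\begin{itemize}
\item \emph{Commuting pairs} ($\phi=0$). If $P=I$ or $Q=I$, the identity $PQ=QP$ follows from \texttt{one\_mul}/\texttt{mul\_one}. If $P=Q$ it holds by reflexivity. No matrix computation is needed here.
\item \emph{Anticommuting pairs} ($\phi=1$). There are six pairs of distinct non-identity Paulis. By symmetry of the claim $PQ=-QP$ it suffices to handle $\{X,Y\}$, $\{Y,Z\}$ and $\{X,Z\}$. Each follows from the products $XY=iZ$, $YX=-iZ$, $YZ=iX$, $ZY=-iX$, $ZX=iY$, $XZ=-iY$, computed from the matrices of Definition~\ref{def:pauli-1}.
\end{itemize}

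In Lean the mechanical route is the same for every case. Run \texttt{cases P <;> cases Q}, reduce $(-1)^{\phi(P,Q)}$ by \texttt{simp} to $1$ or $-1$, and prove the matrix equation by extensionality (\texttt{ext i j}). Then use \texttt{fin\_cases} on $i,j\in\{0,1\}$ and close each entry with \texttt{simp} using \texttt{Matrix.mul\_apply}, \texttt{Fin.sum\_univ\_two} and \texttt{Complex.ext\_iff}.

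The main obstacle is in this entrywise arithmetic over $\mathbb{C}$, specifically normalizing products such as $(-i)\cdot i=1$ and $i\cdot i=-1$. I would handle it with \texttt{Complex.I\_mul\_I} plus \texttt{ring\_nf}. If \texttt{simp} stalls there, the fallback is to split every entry into real and imaginary parts and finish with \texttt{norm\_num}.

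A secondary nuisance arises if the Paulis are stored as elements of the unitary group rather than as raw matrices. In that case I would first coerce to matrices using \texttt{Subtype.ext}, so the equation is checked on the underlying matrices, where scalar multiplication by $-1$ behaves as expected.
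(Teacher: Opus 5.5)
Your proposal is correct and follows essentially the same route as the paper, which defines $\phi$ by a computable case split on the Pauli labels so the lemma reduces to checking all 16 pairs directly (your six anticommuting products $XY=iZ$, $YX=-iZ$, and so on are all correct). You are also right that the printed condition $P=I\wedge Q=I\wedge P=Q$ is a typo for $P=I\vee Q=I\vee P=Q$: read literally, the lemma already fails at $(X,X)$ and at $(I,X)$.
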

%

The $n$-qubit case is the XOR of the single-qubit indicators.
\begin{lemma}
    [Characterization of Pauli commutation]
    Let $P, Q \in \bbP_n$, and let $P_i, Q_i \in \{I, X, Y, Z\}$ be the
    components acting on the $i$-th qubit. Then $PQ = (-1)^a QP$ for $a =
    \bigoplus_{i=1}^n \phi(P_i, Q_i)$.
\end{lemma}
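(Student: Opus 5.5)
Since $P,Q\in\bbP_n=\Image(\PauliFold_n)$, we write $P=\PauliFold_n(\gamma,\vec P)=\gamma\bigotimes_i P_i$ and $Q=\PauliFold_n(\delta,\vec Q)=\delta\bigotimes_i Q_i$. The components $P_i,Q_i$ are well defined because $\PauliFold$ is a bijection. The scalars $\gamma,\delta$ are central, so $PQ=\gamma\delta\,(\bigotimes_i P_i)(\bigotimes_i Q_i)$ and $QP=\gamma\delta\,(\bigotimes_i Q_i)(\bigotimes_i P_i)$. This reduces the statement to the phase-free tensor products. We then argue by induction on $n$, mirroring the recursive structure of $\PauliFold_n$ as an iterated Kronecker product.

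The base case $n=0$ is immediate: both products are the $1\times 1$ identity and $a=0$ is the empty XOR. For the inductive step, split $\vec P=\vec P'\doubleplus(P_{n})$ and likewise for $\vec Q$, so that $\bigotimes_i P_i = A\otimes P_n$ with $A=\bigotimes_{i<n}P_i$, and similarly $\bigotimes_i Q_i=B\otimes Q_n$. The mixed-product property (the operator analogue of \Cref{eq:kron-ustate}, a Mathlib lemma on Kronecker products) gives $(A\otimes P_n)(B\otimes Q_n)=(AB)\otimes(P_nQ_n)$. By the induction hypothesis $AB=(-1)^{a'}BA$ with $a'=\bigoplus_{i<n}\phi(P_i,Q_i)$. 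By the single-qubit lemma $P_nQ_n=(-1)^{\phi(P_n,Q_n)}Q_nP_n$. Pulling the scalars out of the Kronecker product, we obtain
\[(-1)^{a'}(-1)^{\phi(P_n,Q_n)}(BA)\otimes(Q_nP_n)=(-1)^{a'\oplus\phi(P_n,Q_n)}(B\otimes Q_n)(A\otimes P_n).\]
The exponent step uses $(-1)^{x}(-1)^{y}=(-1)^{x\oplus y}$ for bits, a small lemma on $\mathbb{Z}/2$ versus $\pm1$ that we prove once by case analysis.

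We expect the main obstacle to be the index bookkeeping rather than the algebra. The Kronecker product $A\otimes P_n$ is indexed by $\{0,1\}^{n-1}\times\{0,1\}^1$, whereas the $n$-qubit operator is indexed by $\{0,1\}^n$. We therefore have to transport the mixed-product identity through the concatenation bijection of the index-alignment paragraph, and show that reindexing commutes with matrix multiplication and scalar multiplication. We plan to prove once that reindexing along an equivalence is a ring homomorphism on square matrices (Mathlib's \lstinline{Matrix.reindex} is already an algebra equivalence), and then apply it uniformly. If the fold is defined by recursion on the head rather than the tail, we do the induction in that direction instead, so that the definitional unfolding matches and we avoid heterogeneous-equality casts on $(n-1)+1$ versus $n$.
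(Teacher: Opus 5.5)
Your proposal is correct and follows the paper's route. The paper likewise proves the single-qubit characterization first and lifts it qubit-wise, using the compatibility of $\PauliFold$ with $\otimes$ and with products (phases collected); your induction carries out exactly that, via the Kronecker mixed-product property and the $\{0,1\}^{n-1}\times\{0,1\}\cong\{0,1\}^n$ reindexing. One caveat: the single-qubit lemma you cite only holds if the paper's $\phi$ is read with disjunctions, $\phi(P,Q)=0$ iff $P=I\vee Q=I\vee P=Q$, since with the conjunctions as printed it would assert $XX=-XX$.
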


%

The commutation check is implemented as a \emph{computable} function: $\phi$ is
defined by a case split, not via an existential quantifier, and the development
avoids the axiom of choice. When the Lean elaborator encounters a concrete pair
of Paulis, $\phi$ reduces to a Boolean value automatically and downstream proofs
simplify without further work. This is the property that allows the SAT
translation of \Cref{sec:smt} to discharge concrete distance queries by
unfolding rather than by a chain of rewrites.

\subsection{Binary Symplectic Representation}
\label{sec:binsymp}

The four single-qubit Pauli matrices are encoded in two bits via the standard
isomorphism
$$I \leftrightarrow (0 \mid 0), \quad X \leftrightarrow (1 \mid 0), \quad Z
\leftrightarrow (0 \mid 1), \quad Y \leftrightarrow (1 \mid 1),$$
under which phaseless Pauli multiplication becomes vector addition over
$\mathbb{F}_2^2$ and commutation is detected by the \emph{symplectic product}
$$(a \mid b) \odot (c \mid d) = ad - bc \in \mathbb{F}_2,$$
with $0$ for commutation and $1$ for anticommutation. Lifting to $n$ qubits
gives a representation in $\mathbb{F}_2^{2n}$ and the inner-product form
$$(a \mid b) \odot (c \mid d) = a \cdot d - b \cdot c \in \mathbb{F}_2,$$
where $a, b, c, d \in \mathbb{F}_2^n$. As a sanity check, $X \otimes X \otimes I
= (110 \mid 000)$ and $Z \otimes Y \otimes I = (010 \mid 110)$ have symplectic
product $0$, so they commute despite the two qubits on which they anticommute
pairwise.

\paragraph{Two weight functions.} The Binary Symplectic representation supports
two distinct weight notions, and confusing them silently changes which distance
is being computed. The \emph{Pauli weight} of a vector $(x \mid z)$ counts the
qubits at which $(x_i, z_i) \neq (0, 0)$, which is equivalently the number of qubits
at which the encoded operator is not $I$. The \emph{binary weight} is the
Hamming weight of $x$ plus the Hamming weight of $z$ and over-counts the $Y$
positions. Distance arguments require the Pauli weight; we expose both functions
and prove the relationship between them.

\begin{lstlisting}
abbrev BinSympPauli (n : ℕ) := (Fin n → (ZMod 2)) × (Fin n → (ZMod 2))

def union_weight (v₁ v₂ : Fin n → (ZMod 2)) : ℕ := hammingNorm (fun i => ZMod2_or (v₁ i) (v₂ i))

def BinSympPauli.weight {n : ℕ} (bsp : BinSympPauli n) : ℕ := hammingNorm bsp.1 + hammingNorm bsp.2

def symplecticProd {n : ℕ} (bp₁ bp₂ : BinSympPauli n) : (ZMod 2) :=
  dotProduct bp₁.1 bp₂.2 + dotProduct bp₁.2 bp₂.1
\end{lstlisting}

%

The correspondence between the Pauli group and its Binary Symplectic
representation is built on top of a single-qubit equivalence, which we then lift
qubit-wise. The correspondence is exact in one direction --- every binary
symplectic vector decodes to a definite phaseless Pauli --- but loses phase in
the other; we therefore expose a backwards correctness statement that is itself
quantified over an unknown phase.

\begin{lstlisting}
def Z2Z2_Pauli_equiv : ((ZMod 2) × (ZMod 2)) ≃ Pauli

def BinSympPauli_toPauli {n : ℕ} (bs : BinSympPauli n) : PauliGroup_group n :=
  let pm := fun n => Z2Z2_Pauli_equiv (bs.1 n, bs.2 n)
  foldPauli (pgphase_1, pm)

def Pauli_toBinSymp {n : ℕ} (P : PauliGroup_group n) : BinSympPauli n :=
  let pm := fun n => Z2Z2_Pauli_equiv.symm (P.map n)
  ⟨fun n => (pm n).1, fun n => (pm n).2⟩

theorem Pauli_to_BinSymp_correct {n : ℕ} (P₁ P₂ : PauliGroup_group n) :
  Pauli_toBinSympPauli (P₁ * P₂) = (Pauli_toBinSympPauli P₁) + (Pauli_toBinSympPauli P₂) := sorry

theorem BinSympPauli_to_Pauli_correct {n : ℕ} {bp₁ bp₂ : BinSympPauli n} :
  ∃ p : (pgroup_phases), U_phase (BinSympPauli_toPauli (bp₁ + bp₂)).1 p = ((BinSympPauli_toPauli bp₁) : Uₙ[n]) * (BinSympPauli_toPauli bp₂)

theorem commutes_of_sympProd_zero {n : ℕ} {bp₁ bp₂ : BinSympPauli n}
  (h_prod : symplecticProd bp₁ bp₂ = 0) : commute (BinSympPauli_toPauli bp₁) (BinSympPauli_toPauli bp₂)
\end{lstlisting}

%

The signature of \lstinline{BinSympPauli_to_Pauli_correct} reflects this phase
loss: the equality is quantified over a phase \lstinline{p} $\in \{\pm 1, \pm
i\}$ such that the two unitaries agree after \lstinline{p} is applied.
Commutation, in contrast, is phase-insensitive, so
\lstinline{commutes_of_sympProd_zero} states the implication cleanly. This is
the crucial property we use later: distance arguments care only about
commutation with stabilizers, so the loss of phase in the symplectic encoding
does not cost us any soundness.

\subsection{Quantum Error Correction}\label{sec:qec}

%
%

A quantum error-correcting code can be defined at several levels of generality.
We expose two and use whichever is convenient at each layer of the development.

\begin{definition}[Quantum Error-Correcting Code, \cite{gottesman2024surviving}]
    A \textbf{Quantum Error-Correcting Code} is a pair $(U, \mathcal{E})$ where
    $U$ is a partial isometry $\mathbb{C}^{2^k} \rightarrow \mathbb{C}^{2^n}$
    (an inner-product preserving map) and $\mathcal{E}$ is a set of linear maps
    $\mathbb{C}^{2^n} \rightarrow \mathbb{C}^{2^m}$. $U$ is the encoding
    function, while $\mathcal{E}$ is the set of \textbf{correctable errors}.
    This is a valid QECC if:

    $$\exists (\mathcal{D} : \mathbb{C}^{2^m} \rightarrow \mathbb{C}^{2^k}),
    \forall E \in \mathcal{E}, D(EU\ket{\psi}\bra{\psi}U^\dagger E^\dagger) =
    c(E, \ket{\psi})\ket{\psi}\bra{\psi}$$
\end{definition}

%
%

Read informally: a decoder exists that returns the encoded state to its original
form modulo an error-and-state-dependent scalar. The definition is general but
inconvenient in a formalized setting because it bakes in both an encoder and a
list of correctable errors. In Lean, we prefer to keep the code space, the
encoder, and the error set as separate objects and bundle them only when needed.
The following equivalent characterization makes this separation natural.

\begin{definition}
    Given $(Q, \mathcal{E})$, where $Q$ is a subspace of $\mathbb{C}^{2^n}$, and
    $\mathcal{E}$ is a set of maps $\mathbb{C}^{2^n}\rightarrow
    \mathbb{C}^{2^m}$, $(Q, \mathcal{E})$ is a \textbf{QECC} iff $\forall
    \ket{\psi},\ket{\phi} \in Q, \forall E_a, E_b \in \mathcal{E}$:

    $$\bra{\psi}E_a^\dagger E_b\ket{\phi}=C_{ab}\braket{\psi|\phi}$$
\end{definition}

%

Pen-and-paper QEC drops the encoder when the logical labeling is irrelevant and
works with the code space alone. The Lean development pays a real cost for
keeping an encoder around: every lemma must thread it as an explicit argument
even when nothing in the lemma uses it. We therefore strip both the encoder and
the error set out of the core type and re-attach them only when needed. The
result is a code space alone, which cannot be a \lstinline{Submodule} of
$\mathbb{C}^{\{0,1\}^n}$ because our \lstinline{PState} type already carries the
normalization condition; we use \lstinline{Set (PState n)} instead.

\begin{lstlisting}
abbrev QCodeSpace (n : ℕ) := Set (PState n)
\end{lstlisting}

\subsection{Stabilizer Codes}\label{sec:stabbackground}

%

Every code we care about in practice --- and every code we verify in
\Cref{sec:casestudies} --- is a stabilizer code, the specialization of the
subspace definition where the code space is determined by a commuting subgroup
of the Pauli group. The rest of this subsection builds the stabilizer-code
abstraction on top of the Pauli-group infrastructure of the previous
subsections.

\begin{definition}
    Given a subspace $Q$ of $\mathbb{C}^{2^n}$, the stabilizer $S(Q)$ is:

    $$S(Q) = \{A \in P_n : \forall \ket{\psi} \in Q, A\ket{\psi}=\ket{\psi}$$
\end{definition}

%
%

$S(Q)$ is a subgroup: closure follows from $UV\ket{\psi} = U\ket{\psi} =
\ket{\psi}$, inverses are inherited from the Pauli group, the subgroup is
abelian, and $-I \notin S(Q)$ for any nontrivial $Q$. The map runs the other way
as well: an abelian subgroup of the Pauli group that does not contain $-I$
determines a code space.

\begin{definition}
    The \textbf{Codespace} corresponding to stabilizer subgroup $S$ is:

    $$T(S)=\{\ket{\psi}: \forall A \in S, A\ket{\psi}=\ket{\psi}\}$$
\end{definition}

%

Phaseless Pauli operators have eigenvalues $\pm 1$, so $T(S)$ is exactly the
simultaneous $+1$ eigenspace of $S$.

\begin{definition}
    A stabilizer code is a QECC $(Q, \mathcal{E})$ where $T(S(Q))=Q$.
\end{definition}

%

Stabilization and its closure properties follow directly from the Pauli-group
infrastructure of \Cref{sec:binsymp}.

\begin{lstlisting}
def stabilizes (U : Uₙ[n]) (ψ : PState n) := ψ.apply U = ψ

theorem stab_prod {U₁ : Uₙ[n₁]} {U₂ : Uₙ[n₂]} {ψ₁ : PState n₁} {ψ₂ : PState n₂}
  (hs₁ : stabilizes U₁ ψ₁) (hs₂ : stabilizes U₂ ψ₂) :
  stabilizes (U₁ ⊗ₙ U₂) (ψ₁ ⊗ₚ ψ₂)

theorem inv_stab {U : Uₙ[n]} {ψ : PState n} (hstab : stabilizes U ψ) : stabilizes U⁻¹ ψ

theorem mul_stab {n : ℕ} {U₁ U₂ : Uₙ[n]} {ψ : PState n} (hstab₁ : stabilizes U₁ ψ) (hstab₂ : stabilizes U₂ ψ) : stabilizes (U₁ * U₂) ψ
\end{lstlisting}

%

On top of the abstract QECC definition, a stabilizer code bundles together (i)~a
code space, (ii)~a commuting subgroup of the Pauli group, and (iii)~a proof that
the subgroup stabilizes the code space. Carrying the commutativity proof inside
the structure makes downstream proofs shorter, avoiding re-deriving it at every use site.

\begin{lstlisting}
structure StabCode (n : ℕ) where
  space : QCodeSpace n
  stabs : Subgroup (@PauliGroup_group n)
  h_comm : IsMulCommutative stabs
  hstab : ∀ ψ ∈ space, ∀ S ∈ stabs, stabilizes S ψ
\end{lstlisting}

%

The condition $-I \notin \mathtt{stabs}$ is derivable from this definition but
we leave it implicit. Code constructions almost always specify a stabilizer
group by its generators rather than as a fully expanded subgroup, so we provide
two structures for the generators: \lstinline{StabSet} for an arbitrary
commuting set of generators, and \lstinline{NormalizedStabSet} for the common
case where all generators have phase $+1$.

\begin{lstlisting}
structure StabSet (n : ℕ) where
  stabs : Finset (PauliGroup_group n)
  comm : ∀ s₁ ∈ stabs, ∀ s₂ ∈ stabs, commute s₁ s₂

structure NormalizedStabSet (n : ℕ) extends StabSet n where
  norm: ∀ s ∈ stabs, PauliGroup.phase s = pgphase_1

def StabCode_of_StabSet {n : ℕ} (S : StabSet n)
 : StabCode n where
   space := {ψ | ∀ s ∈ S.stabs, stabilizes s ψ}
   stabs := Subgroup.closure (SetLike.coe S.stabs)
   h_comm := ...
   hstab := ...
\end{lstlisting}

%
%

A stabilizer code corrects errors by projectively measuring the generators of
its stabilizer group; the eigenvalues observed form the \emph{syndrome}. A
simple-minded definition of ``detectable'' would require every correctable error
to have a unique syndrome, but the Shor $[[9,1,3]]$ code shows this is too
strong: as \Cref{tab:shor_stabilizers} indicates, $Z$ on qubit 1 and $Z$ on
qubit 2 produce the same syndrome (both flip only $M_7$), yet they correct
identically because their product lies in the stabilizer group. The right notion
of ``undetectable'' must therefore exclude errors that act trivially on the code
space, not merely those with non-unique syndromes.

\begin{table}[ht]
\centering
\caption{Stabilizer Generators for the $[[9, 1, 3]]$ Shor Code}
\label{tab:shor_stabilizers}
\begin{tabular}{@{}ccc@{}}
\toprule
\textbf{Generator} & \textbf{Type} & \textbf{Operator String} ($Q_1 \dots Q_9$) \\ \midrule
$M_1$ & Phase-Flip & $Z Z I \quad I I I \quad I I I$ \\
$M_2$ & Phase-Flip & $I Z Z \quad I I I \quad I I I$ \\
$M_3$ & Phase-Flip & $I I I \quad Z Z I \quad I I I$ \\
$M_4$ & Phase-Flip & $I I I \quad I Z Z \quad I I I$ \\
$M_5$ & Phase-Flip & $I I I \quad I I I \quad Z Z I$ \\
$M_6$ & Phase-Flip & $I I I \quad I I I \quad I Z Z$ \\ \addlinespace
$M_7$ & Bit-Flip   & $X X X \quad X X X \quad I I I$ \\
$M_8$ & Bit-Flip   & $I I I \quad X X X \quad X X X$ \\ \bottomrule
\end{tabular}
\end{table}

%
%
%
%
%
%
%
%

\begin{definition}
    The \emph{normalizer} of a stabilizer $S$ is $N(S) = \{A \in P_n : \forall B
    \in S,\ AB = BA\}$.
\end{definition}
A Pauli error that commutes with every stabilizer is invisible to syndrome
measurement and maps the code space to itself. The stabilizer group is abelian,
so its own elements always commute with all stabilizers; these elements act
trivially on the code space and must be excluded. The remaining elements are the
normalizer minus stabilizer, which are exactly the errors that escape detection
while still changing the encoded state.
\begin{definition}
    The \emph{undetectable set} of a stabilizer code with stabilizer $S$ is
    $N(S) \setminus S$. The \emph{distance} is the minimum Pauli weight over the
    undetectable set. A code of distance $d$ \emph{corrects} $\lfloor (d-1)/2
    \rfloor$ errors.
\end{definition}
These map directly into Lean.

\begin{lstlisting}
def StabCode.normalizer {n : ℕ} (SC : StabCode n) : Finset (PauliGroup n) := {N | ∀ s ∈ SC.stabs, commute N s}

def StabCode.undetectable {n : ℕ} (SC : StabCode n) (E : PauliGroup n) : Prop := E ∈ SC.normalizer ∧ E ∉ SC.stabs ∧ PauliGroup.normalized E

def StabCode.undetectable_set {n : ℕ} (SC : StabCode n) : Finset (PauliGroup n) := {E | SC.undetectable E}

def StabCode.distance {n : ℕ} (SC : StabCode n) (hnt : SC.nontrivial) : ℕ := Finset.min' (Finset.image (fun E => pauli_weight E) SC.undetectable_set)
 (Finset.image_nonempty.2 (StabCode.undetectable_set_nonempty SC hnt))
\end{lstlisting}

%
%

Two details in the Lean definition deserve comment. The argument \lstinline{hnt}
to \lstinline{StabCode.distance} establishes that the undetectable set is
nonempty, so the minimum is well-defined (cf.\ the sentinel discussion of
\Cref{sec:fv}). And \lstinline{StabCode.undetectable} restricts attention to
phase-$+1$ Paulis; without this, $-I$ would appear as an undetectable element of
every code --- it commutes with all stabilizers, is not itself a stabilizer, yet
acts trivially on the code space. Because commutation is phase-insensitive, this
restriction loses no genuine errors.

\paragraph{Binary symplectic matrix.} Applying the binary symplectic
representation to a minimal generating set of a stabilizer code yields a
\emph{binary symplectic matrix} (BSM): two $\mathbb{F}_2$ matrices, one for the
$X$-component and one for the $Z$-component, with rows that are linearly
independent and have pairwise symplectic product zero. An undetectable error in
this representation is a vector with symplectic product zero against every row
that does not itself lie in the row space.

\begin{lstlisting}
structure BinSympMatrix (k n : ℕ) where
  X : Matrix (Fin k) (Fin n) (ZMod 2)
  Z : Matrix (Fin k) (Fin n) (ZMod 2)

def BinSympMatrix.toStabSet {n k : ℕ} (bsm : BinSympMatrix k n) (h_comm : bsm.isCommuting) : NormalizedStabSet n where
  stabs := Finset.image (fun i => BinSympPauli_toPauli (bsm.row i)) Finset.univ
  comm := --
  norm := --

def BinSympMatrix.undetectable {k n : ℕ} (B : BinSympMatrix k n) (bsp : BinSympPauli n) : Prop :=
  (∀ i, symplecticProd (B.row i) bsp = 0) ∧ bsp ∉ B.rowSpace

def BinSympMatrix.distance {k n : ℕ} (B : BinSympMatrix k n) (hk : 0 < k) : ℕ := Finset.min' (B.undetectable_set.image BinSympPauli.weight)
(Finset.image_nonempty.2 (B.undetectable_set_nonempty hk))
\end{lstlisting}

%

The end-to-end story rests on the following correspondence theorem, which is the
bridge that transports SAT-discharged distance bounds (at the BSM level) into
theorems about the quantum stabilizer code.

\begin{lstlisting}
theorem BinSympMatrix.distance_eq_distance {k n : ℕ} (B : BinSympMatrix k n) (h_comm : B.isCommuting) (hk : 0 < k) :
  (StabCode_of_BinSympMatrix B h_comm).distance (B.toStabCodeNontrivial_of_k_pos h_comm hk) =
  B.distance hk
\end{lstlisting}

%

The BSR loses phase information. For distance arguments this loss is harmless. An error is an error regardless of phase, and commutation against
stabilizers is phase-insensitive, but other quantum-theoretic claims (e.g.\
relations like $XZ = iY$) become false at the BSR level. In Lean this means
correspondence theorems are stated carefully and we never accidentally lift a
phase-dependent fact from one representation to the other.

\subsection{CSS and BB Codes}\label{sec:cssbb}

%

CSS codes are stabilizer codes whose BSM splits cleanly into an $X$-block and a
$Z$-block, each derived from a classical linear code. We follow the standard
recipe: given classical linear codes $C_1, C_2 \subseteq \mathbb{Z}_2^n$ with
$C_1^\perp \subseteq C_2$, we build the BSM from their parity-check matrices.
Recall that the dual code $C^\perp$ is the orthogonal subspace under the dot
product, a generator matrix has linearly independent rows that span $C$, and a
parity-check matrix is a generator matrix for $C^\perp$.

\begin{definition}
    Given classical linear codes $C_1, C_2 \subseteq \mathbb{Z}_2^n$, where
    $C_1^\perp \subseteq C_2$, a \textbf{CSS Code} is defined starting from the
    following binary symplectic matrix:
$$
\begin{blockarray}{cc}
X & Z \\  
\begin{block}{(cc)}
  0 & H_1 \\
  H_2 & 0 \\
\end{block}
\end{blockarray}
$$

Where $H_1, H_2$ are parity-check matrices for the two codes. 
\end{definition}

%
%
%

Linear independence of the BSM rows is immediate: each block inherits
independence from its parity-check matrix and the two blocks have disjoint
$X$/$Z$ support. The mutual symplectic-product-zero condition splits into three
cases. Two rows from the same block produce a dot product of a nonzero half
against a zero half, which is zero. A row from the $H_1$ block paired with a row
from the $H_2$ block produces a sum of two dot products, one of which is $0
\cdot \cdot = 0$ and the other of which is between an element of $C_1^\perp$ and
an element of $C_2^\perp$; by $C_1^\perp \subseteq C_2$ this is also zero.

\paragraph{Classical coding theory in Lean.} Mathlib does not yet contain a
classical coding theory library, so we built one. The core type is a subspace of
$\mathbb{F}_q^n$; we go from a finite generating set to a matrix via Mathlib's
\lstinline{Submodule.span} rather than the other direction, because
\lstinline{Submodule.span} is the side Mathlib's inner-product/dual machinery is
wired up against. The dual code is then defined through Mathlib's bilinear-form
orthogonal-complement machinery rather than re-derived from scratch.

\begin{lstlisting}
abbrev CodeSpace (n : ℕ) := Submodule (ZMod 2) (Fin n → ZMod 2)

abbrev CodeSpace (n : ℕ) := Submodule (ZMod 2) (Fin n → ZMod 2)

def codeFromGenerators (G : Finset (Fin n → ZMod 2)) : CodeSpace n :=
Submodule.span (ZMod 2) G

def Matrix.toCodeSpace {α : Type*} [Fintype α] (M : Matrix α (Fin n) (ZMod 2)) : CodeSpace n :=
  Submodule.span (ZMod 2) (Finset.univ.image (fun i => M i))

def CodeSpace.dualCode (C : CodeSpace n) : CodeSpace n := (C.orthogonalBilin (dotProductBilin _ _))
\end{lstlisting}

%

Generator and parity-check matrices are then characterized by predicates:
row-wise linear independence plus row-span equal to the code (resp.\ its dual).

\begin{lstlisting}
def CodeSpace.generatorMatrixOf (C : CodeSpace n) (G : Matrix α (Fin n) (ZMod 2))
  : Prop := (LinearIndependent (ZMod 2) G.row) ∧ Submodule.span (ZMod 2) (Finset.image (fun r => G r) Finset.univ) = C

def CodeSpace.parityCheckMatrixOf (C : CodeSpace n) (G : Matrix α (Fin n) (ZMod 2))
  : Prop := C.dualCode.generatorMatrixOf G
\end{lstlisting}

%

Two theorems anchor the classical layer: duality is an involution,
$(C^\perp)^\perp = C$, and dimensions add up, $\dim C + \dim C^\perp = n$. The
latter is what \Cref{lma41} uses to certify an externally supplied parity-check
kernel.

\begin{lstlisting}
theorem dual_dual_eq {C : CodeSpace n} : (C.dualCode).dualCode = C

lemma dual_finrank_eq {C : CodeSpace n} : Module.finrank (ZMod 2) C.dualCode = n - (Module.finrank (ZMod 2) C)
\end{lstlisting}

%

A CSS code is then a stabilizer code whose BSM is assembled from two
parity-check matrices satisfying the dual-containment condition.

\begin{lstlisting}
def CSS_BSM (H₁ : Matrix (Fin k₁) (Fin n) (ZMod 2))
  (H₂ : Matrix (Fin k₂) (Fin n) (ZMod 2)) : BinSympMatrix (k₁ + k₂) n where
    X := Fin.append (fun (_ : Fin k₁) => (fun _ => 0)) H₂
    Z := Fin.append H₁ (fun (_ : Fin k₂) => (fun _ => 0))

theorem CSS_BSM_is_comm {C₁ C₂ : CodeSpace n} (H_dual : (C₁.dualCode : Set (Fin n → ZMod 2)) ⊆ C₂) {H₁ : Matrix (Fin k₁) (Fin n) (ZMod 2)}
  {H₂ : Matrix (Fin k₂) (Fin n) (ZMod 2)} (hpc₁ : C₁.parity_checked_by H₁)
  (hpc₂ : C₂.parity_checked_by H₂) : (CSS_BSM H₁ H₂).isCommuting
\end{lstlisting}

%
%
%
%
%

We take \lstinline{parity_checked_by} (a row-span condition) rather than the
strict \lstinline{parityCheckMatrixOf} predicate (row-span plus
row-independence) for the parity-check arguments. The relaxation matters in
practice: many code constructions in the literature, including the $[[90, 8,
10]]$ BB code we verify in \Cref{sec:casestudies}, specify dependent generator
sets that fail strict independence. The independence can always be re-proved
separately, but baking the relaxation in keeps the CSS-code constructor modular.

\paragraph{Bivariate Bicycle codes.} The CSS family we ultimately scale to is
the Bivariate Bicycle (BB) family --- low-density parity-check codes with high
rate and good distance, currently the leading candidate for fault-tolerant
computation on superconducting hardware. Let $S_n$ be the $n \times n$ cyclic
shift permutation matrix and define
$$x = S_\ell \otimes I_m, \qquad y = I_\ell \otimes S_m.$$
A BB code is built from two trivariate polynomials in $x$ and $y$:
\begin{definition}
    A \emph{Bivariate Bicycle} code is the CSS code with parity-check matrices
    $H_X = [A \mid B]$ and $H_Z = [B^T \mid A^T]$, where $A = A_1 + A_2 + A_3$
    and $B = B_1 + B_2 + B_3$, each $A_i, B_i$ a monomial in $x$ and $y$.
\end{definition}
Our case studies in \Cref{sec:casestudies} verify the $[[90, 8, 10]]$ instance
inside Lean and show that the same encoding scales to $[[144, 12, 12]]$ outside
the kernel.

\section{Solver-Assisted Distance Verification}
\label{sec:smt}

%
%

This section presents the verified distance pipeline at the heart of
\textsc{Lean-QEC}. The goal is to discharge a statement of the form ``the code
defined by these parity-check matrices has distance at least $d$'' inside Lean,
even when $d$ and the code size place the underlying combinatorial check out of
reach of any pen-and-paper or direct ITP argument.

\paragraph{Why SAT, not ILP.} The textbook formulation of distance is an integer
program: minimize the Pauli weight of an error subject to commutation with all
stabilizers and exclusion from the row space. An ILP front-end would be the most
direct match. Unfortunately Lean has no verified ILP backend at present, so we
instead reduce to Boolean satisfiability, prove the reduction sound in Lean, and
use the existing SAT integration.

\paragraph{Trusted base.} Lean's \lstinline{bv_decide}
tactic~\cite{biere2024cadical} bitblasts a goal phrased in \lstinline{BitVec}
and \lstinline{Bool} into CNF, sends the formula to CaDiCaL, receives an
LRAT~\cite{lrat} certificate, and reconstructs the certificate inside Lean's
kernel. The solver itself does not enter the trusted base; only the LRAT checker
and Lean's kernel do. The LRAT certificate is cached on disk, so future builds
skip the SAT call and re-check the proof directly.

\subsection{Reduction to Boolean Satisfiability}
\label{sec:distproblem}
\label{sec:translation}

The stabilizer-code distance is
$$d = \min_{E \in P_n} \mathrm{wt}(E)\ \text{s.t.}\ \forall S \in \text{Stab},\
SE = ES \ \text{and}\ E \notin \text{Stab}.$$
The end-to-end correspondence theorems of \Cref{sec:stabbackground} let us
discharge this query at the BSR level: writing $R(B)$ for the rows of the binary
symplectic matrix,
$$d = \min_{E \in \mathbb{Z}_2^{2n}} \mathrm{wt}(E_x \cup E_z)\ \text{s.t.}\
\forall S \in R(B),\ E \odot S = 0 \ \text{and}\ E \notin
\mathrm{rowspace}(B).$$
Everything in this query is a Boolean predicate except for ``$E \notin
\mathrm{rowspace}(B)$.'' We replace it with the dual formulation: row space
equals the orthogonal complement of the kernel, so $E \notin
\mathrm{rowspace}(B)$ holds exactly when $E \cdot g = 1$ for some generator $g$
of $\ker B$. With a finite generating set $G$ for the kernel, the constraint
becomes a finite OR of $\mathbb{F}_2$-dot-product equations.

\paragraph{CSS specialization.} For CSS codes the search splits exponentially.
It suffices to consider purely-$X$ errors (against $Z$-stabilizers) and
purely-$Z$ errors (against $X$-stabilizers); the code distance is then the
minimum of the two,
$$d = \min(d_X, d_Z),$$
where $d_X$ is the minimum weight of an $X$-only error that commutes with all
$Z$-stabilizers and is not generated by the $X$-stabilizers, and symmetrically
for $d_Z$. Each side reduces to a SAT instance over $\mathbb{F}_2$ variables
with a Hamming-weight bound, exactly the shape \lstinline{bv_decide} accepts.

%
\begin{mybox}{\textbf{Worked example: Shor 9-qubit code.}}
We trace the reduction on a small
CSS code with parity-check matrices

$$H_Z = \begin{pmatrix}
    1 & 1 & 0 & 0 & 0 & 0 & 0 & 0 & 0  \\
    0 & 1 & 1 & 0 & 0 & 0 & 0 & 0 & 0 \\
    0 & 0 & 0 & 1 & 1 & 0 & 0 & 0 & 0 \\
    0 & 0 & 0 & 0 & 1 & 1 & 0 & 0 & 0\\
    0 & 0 & 0 & 0 & 0 & 0 & 1 & 1 & 0\\
    0 & 0 & 0 & 0 & 0 & 0 & 0 & 1 & 1\\
\end{pmatrix}$$

$$H_X = \begin{pmatrix}
    1 & 1 & 1 & 1 & 1 & 1 & 0 & 0 & 0 \\
    0 & 0 & 0 & 1 & 1 & 1 & 1 & 1 & 1
\end{pmatrix}$$

%

The claim ``distance at least 3'' splits, by the CSS specialization, into two
SAT queries: no weight-$\leq 2$ error in the $X$-only sector and no weight-$\leq
2$ error in the $Z$-only sector. Each query uses a kernel basis for the opposite
parity-check matrix (these are also generators of the underlying classical codes
$C_1$ and $C_2$):

$$G_X = \begin{pmatrix}
    1&1&0&0&0&0&0&0&0\\
1&0&1&0&0&0&0&0&0\\
0&0&0&1&1&0&0&0&0\\
0&0&0&1&0&1&0&0&0\\
1&0&0&1&0&0&1&0&0\\
1&0&0&1&0&0&0&1&0\\
1&0&0&1&0&0&0&0&1\\
\end{pmatrix}$$

$$G_Z = \begin{pmatrix}
    1&1&1&0&0&0&0&0&0\\
0&0&0&1&1&1&0&0&0\\
0&0&0&0&0&0&1&1&1\\
\end{pmatrix}$$

%

The $X$-sector query (commutation with the six $Z$-stabilizer rows of $H_Z$,
non-membership in the row space of $H_X$, weight bound) instantiates to:

$$E_1 \oplus E_2 = 0 \wedge E_2 \oplus E_3 = 3 \wedge E_4 \oplus E_5 =0 \wedge
E_5 \oplus E_6 = 0 \wedge E_7 \oplus E_8 = 0 \wedge E_8 \oplus E_9 = 0 $$
$$\wedge$$
$$E_1 \oplus E_2 = 1 \vee E_2 \oplus E_3 = 3 \vee E_4 \oplus E_5 =1 \vee E_4
\oplus E_6 = 1 \vee E_1 \oplus E_7 = 1 \vee E_1 \oplus E_8 = 1 \vee E_1 \oplus
E_9 = 1 $$

$$|E| < 3$$
    
\end{mybox}

\subsection{Practical Verification for Larger Codes}

The Shor example fits comfortably inside Lean with the straightforward encoding.
Industrial-scale codes do not. The solver-based pipeline has three stages:
\begin{enumerate}[leftmargin=1.5em]
    \item \textbf{Formula construction:} translate the parity-check matrices and
    the distance bound into a propositional formula.
    \item \textbf{Solver dispatch:} hand the formula to CaDiCaL and obtain
    either a satisfying assignment or an LRAT UNSAT certificate.
    \item \textbf{Proof reconstruction:} re-check the certificate in Lean's
    kernel.
\end{enumerate}
Each stage breaks first for a different code size, and each demands its own
optimization. We describe the bottlenecks in the order they appear as code size
grows.

\paragraph{Stage 1: \lstinline{Matrix}-to-\lstinline{BitVec} flattening.} The
naive encoding builds the formula directly out of Mathlib's \lstinline{Matrix (Fin k) (Fin n) (ZMod 2)} type. \lstinline{Matrix} is
internally a function
$\mathrm{Fin}\,k \times \mathrm{Fin}\,n \to \mathbb{Z}_2$, so constructing the
formula requires materializing every entry as a separate Lean expression. The
memory footprint blows up before the $[[90,8,10]]$ BB code instance even reaches
the solver.

We introduce a \lstinline{BitVec}-flattened intermediate representation: a
matrix of type \lstinline{Matrix (Fin k) (Fin n) (ZMod 2)} is encoded as a
single \lstinline{BitVec (k * n)}, with vectors handled analogously.
\lstinline{BitVec} is the type Lean's program-verification stack is optimized
for and the type \lstinline{bv_decide} accepts natively; flattening once and
operating on the \lstinline{BitVec} carries us comfortably past 90 qubits. The
pipeline includes a proven equivalence between the \lstinline{BitVec} and
\lstinline{Matrix} representations, so the original quantum-theoretic distance
statement still applies.

\paragraph{Stage 2: location-based error encoding.} With the encoding fixed, the
next wall is the solver itself. A one-variable-per-qubit error encoding produces
formulas with as many independent variables as qubits, and the SAT solver scales
exponentially in this count. Empirically the 90-qubit BB instance is already at
the edge of practicality, and 144 qubits is intractable outside Lean under this
encoding.

We exploit the weight bound implicit in distance verification: an error
witnessing distance $< d$ has Hamming weight at most $d-1$, so it is uniquely
described by a list of at most $d-1$ \emph{locations} rather than by an $n$-bit
vector. We replace the per-bit variables with $(d-1)$ location variables of
width $\lceil \log_2 n \rceil$, reducing the independent-variable count from $n$
to $k\lceil \log_2 n \rceil$ where $k \leq d-1$.

The effect is concrete. For the $[[144, 12, 12]]$ BB code, $d = 12$ means $k
\leq 11$ and $\lceil \log_2 144 \rceil = 8$, so the location encoding uses $88$
variables versus $144$ --- a $56$-variable reduction. The change does not
literally divide the search space by $2^{56}$ (the constraints couple the
variables non-trivially), but in practice it brings the 144-qubit instance into
the range CaDiCaL discharges in minutes outside Lean.

A further search space reduction is performed through \textbf{symmetry breaking}. In addition to the location constraints, we introduce a constraint enforcing that each location variable be less than or equal to the succeeding variable. In the naive setting, this reduces the search space factorially with respect to the code distance. 

\paragraph{Stage 3: proof reconstruction.} With Stages 1 and 2 in place,
certificate reconstruction inside Lean's kernel has gone through smoothly
through the $[[90, 8, 10]]$ BB code. Beyond that size we expect Lean's LRAT
replay to become the next bottleneck. Lean provides an option to trust the SAT
solver's output without re-checking it; doing so would extend our scaling
envelope at the cost of admitting CaDiCaL into the trusted base, a trade-off we
view as clean enough to be worth taking when the verification target is a code
too large for kernel replay.

\subsection{Kernel Checking}

%
%

The SAT translation requires a generating set for the kernel of each
parity-check matrix. Outside Lean this is a one-line linear-algebra call; inside
Lean we need either a verified implementation of Gaussian elimination or a way
to certify that an externally-supplied basis really generates the kernel. We
take the second route: we accept the kernel as an opaque input and discharge a
Lean-side correctness check by the following corollary of rank-nullity, which
avoids any need to formalize row-reduction.

\begin{lemma}\label{lma41}
    Let $M_1, M_2$ be binary matrices with dimensions $k_1 * n$ and $k_2 * n$
    respectively. Let $r_1 + r_2 = n$. If every row of $M_1$ is orthogonal to
    every row of $M_2$, and $r_1 \le \text{rank } M_1$ and $r_2 \le \text{rank }
    M_2$, then the rowspace of $M_2$ is the kernel of $M_1$.
\end{lemma}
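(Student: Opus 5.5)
Let $R_1 = \mathrm{rowspace}(M_1)$ and $R_2 = \mathrm{rowspace}(M_2)$, viewed as submodules of $\mathbb{F}_2^n$ (the \lstinline{Matrix.toCodeSpace} construction). We interpret ``the kernel of $M_1$'' as $\{v : M_1 v = 0\}$. This is exactly the dual code $R_1^\perp$, since $v$ is killed by every row of $M_1$ iff it is orthogonal to their span. The plan is to show $R_2 \subseteq R_1^\perp$ and then close the gap with a dimension count, using the classical-coding lemma \lstinline{dual_finrank_eq}, namely $\dim C^\perp = n - \dim C$.

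\textbf{Inclusion.} The hypothesis says every row of $M_2$ lies in $R_1^\perp$. Because $R_1^\perp$ is a submodule, the span of those rows also lies in it, so $R_2 \subseteq R_1^\perp$. In Lean this is \lstinline{Submodule.span_le} followed by an unfolding of \lstinline{orthogonalBilin}. The one subtlety is that membership in the orthogonal complement quantifies over all of $R_1$, not only over the rows of $M_1$. We handle this by a second span induction, using bilinearity of the dot product, or equivalently by noting that the orthogonal complement of a span equals the orthogonal complement of the generating set.

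\textbf{Dimension count.} Since rank is row rank, we have $\dim R_1 \ge r_1$ and $\dim R_2 \ge r_2$. Then
\[ \dim R_1^\perp = n - \dim R_1 \le n - r_1 = r_2 \le \dim R_2. \]
Combined with $R_2 \le R_1^\perp$, this forces $\dim R_2 = \dim R_1^\perp$. A submodule contained in a finite-dimensional submodule of equal finrank equals it, by \lstinline{Submodule.eq_of_le_of_finrank_eq}. Hence $R_2 = R_1^\perp = \ker M_1$.

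\textbf{Expected obstacle.} The main difficulty is bookkeeping rather than mathematics. Lean's \lstinline{Matrix.rank} is defined as the finrank of the range of \lstinline{mulVecLin}, which is the column space. We must identify it with the finrank of the span of the rows, either through Mathlib's \lstinline{Matrix.rank_eq_finrank_span_row} or through \lstinline{rank_transpose}. We must also match the Finset-image span used in \lstinline{Matrix.toCodeSpace} with \lstinline{Set.range}. A further point is that natural-number subtraction in \lstinline{dual_finrank_eq} requires $\dim R_1 \le n$, which follows from \lstinline{Submodule.finrank_le}. If the rank identification proves awkward, the fallback is to state the lemma directly with $r_i \le \mathrm{finrank}(R_i)$, since that is the only form the argument actually uses.
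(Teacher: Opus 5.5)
Your proof is correct and takes the same route the paper indicates: orthogonality gives $\mathrm{rowspace}(M_2) \subseteq \mathrm{rowspace}(M_1)^\perp = \ker M_1$. The dimension identity $\dim C^\perp = n - \dim C$ (the paper's \texttt{dual\_finrank\_eq}, its ``corollary of rank--nullity''), combined with the rank lower bounds and $r_1 + r_2 = n$, then forces the two dimensions to agree, so the inclusion is an equality.
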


%

The kernel-correctness obligation splits into two sub-obligations, each handled
by its own technique below: a rank lower bound on the parity-check matrix and on
the supposed kernel (\Cref{sec:rank-check}), and mutual row-orthogonality
between them (\Cref{sec:orth-check}).

\subsubsection{Rank and Independence Checking}
\label{sec:rank-check}

%
%

\Cref{lma41} asks for a rank \emph{lower bound}, not full row-independence. The
relaxation matters in practice because many codes specify dependent generator
sets: the Toric code's full stabilizer set is dependent, and each parity-check
matrix of the $[[90, 8, 10]]$ BB code carries 4 dependent rows. A formalization
that required full independence would have to reject these specifications or
silently drop rows --- both unappealing.

To certify $\mathrm{rank}(M) \geq r$, it suffices to exhibit a row submatrix
with $r$ linearly independent rows. We discharge the linear independence of that
submatrix by another SAT query: the submatrix has full rank iff no nontrivial
$\mathbb{F}_2$-coefficient vector annihilates its rows, which is a finite
Boolean statement. The independence query consistently runs much faster than the
distance query for the same code --- circumstantial evidence that CaDiCaL is
performing something like an internal Gaussian elimination over the smaller
variable set.

\subsubsection{Orthogonality}
\label{sec:orth-check}

%
%

CSS codes additionally require the dual-containment condition $C_1^\perp
\subseteq C_2$, which on the parity-check matrices is mutual row-orthogonality.
The same condition couples the parity-check matrix to its supplied kernel basis
under \Cref{lma41}. Both checks reduce to a polynomial number of
$\mathbb{F}_2$-dot-product evaluations between rows.

We define \lstinline{bitvec_mutually_orth} on a pair of \lstinline{BitVec}s,
verified consistent with the matrix-level statement, and discharge concrete
instances by evaluating the predicate down to a Boolean. Because every building
block is decidable --- no use of the choice operator anywhere on the path ---
the goal reduces by unfolding. We use Lean's \lstinline{native_decide} tactic
rather than \lstinline{simp} for this final reduction: it compiles the decision
procedure to native code and runs orders of magnitude faster than the kernel
simplifier. The trade-off is that \lstinline{native_decide} extends the trusted
base with Lean's compiler; we accept this for orthogonality checks but not,
e.g., for the distance reduction itself.

\section{Case Studies}\label{sec:casestudies}

%

This section instantiates the pipeline of \Cref{sec:smt} on three CSS codes of
increasing size: the $[[7,1,3]]$ Steane code, the $[[23,1,7]]$ Golay code, and
the $[[90,8,10]]$ Bivariate Bicycle code. Each case study follows the same
template --- input matrices, kernel certification, distance SAT query --- so the
development is reusable: instantiating the framework on a new CSS code amounts
to supplying the parity-check matrices, their kernel bases (computed offline),
and the chosen distance bound, and renaming. We close the section by reporting
the resource envelope outside Lean for the $[[144, 12, 12]]$ BB code.

\subsection{The Steane Code}

%

The $[[7,1,3]]$ Steane code is the simplest non-trivial case study. It is a CSS
code with $H_X = H_Z$, both equal to the parity-check matrix of the $[7,4,3]$
binary Hamming code; consequently $d_X = d_Z$ and the distance argument is
carried out once.

$$H_x = H_z = \begin{pmatrix}
    1 & 0 & 0 & 1 & 0 & 1 & 1 \\ 
    0 & 1 & 0 & 1 & 1 & 0 & 1 \\
    0 & 0 & 1 & 0 & 1 & 1 & 1
\end{pmatrix}$$

%

The matrix is entered directly as a Mathlib \lstinline{Matrix} for human
readability:

\begin{lstlisting}
def steane_mat' : Matrix (Fin 3) (Fin 7) (ZMod 2) :=
  !![1, 0, 0, 1, 0, 1, 1;
    0, 1, 0, 1, 1, 0, 1;
    0, 0, 1, 0, 1, 1, 1]
\end{lstlisting}

%

All subsequent operations happen on the \lstinline{BitVec}-flattened form
(\Cref{sec:translation}). An external Python script computes the hexadecimal
flattening (\texttt{0x1d2d69} for the Steane matrix) and we discharge the
equivalence to the \lstinline{Matrix} form by Lean evaluation of the verified
\lstinline{flatten_matrix} function.

\begin{lstlisting}
def steane_mat : BitVec (3 * 7) :=
  0x1d2d69
lemma steane_mat_correct : steane_mat = flatten_matrix steane_mat' := by decide
\end{lstlisting}

%

The kernel basis is computed by the same script and entered alongside its
hexadecimal flattening.

\begin{lstlisting}
def steane_ker' : Matrix (Fin 4) (Fin 7) (ZMod 2) :=
 !![1, 1, 0, 1, 0, 0, 0;
 0, 1, 1, 0, 1, 0, 0;
 1, 0, 1, 0, 0, 1, 0;
 1, 1, 1, 0, 0, 0, 1;]
def steane_ker : BitVec (4 * 7) :=
  0x8e94b0b
lemma steane_ker_correct : steane_ker = flatten_matrix steane_ker' := by decide
\end{lstlisting}

%
%

\Cref{lma41} demands a rank lower bound on both the parity-check matrix and the
supplied kernel. For the Steane code both matrices have full row rank, so the
witness submatrix is the matrix itself and the rank bound reduces to
row-independence. We discharge independence via the SAT route of
\Cref{sec:rank-check}: rewrite to the \lstinline{BitVec} form, normalize to the
shape \lstinline{bv_decide} accepts, and let CaDiCaL verify that no nontrivial
$\mathbb{F}_2$-combination of rows is zero.

\begin{lstlisting}
lemma steane_ind : LinearIndependent (ZMod 2) steane_mat' := by
  rw [linear_indep_SAT_correct, ←steane_mat_correct, steane_mat]
  simp only [... long lemma list ...]
  bv_decide
lemma steane_ker_ind : LinearIndependent (ZMod 2) steane_ker' := by
  rw [linear_indep_SAT_correct, ←steane_ker_correct, steane_ker]
  simp only [... long lemma list ...]
  bv_decide
\end{lstlisting}

%

Linear independence then yields the rank bound used in \Cref{lma41}.

\begin{lstlisting}
lemma steane_mat_rank : 3 ≤ steane_mat'.rank := by
  apply Matrix.rank_le_of_submatrix_independent _ id (strictMono_id)
  rw [Matrix.submatrix_id_id]
  apply steane_ind

lemma steane_ker_rank : 4 ≤ steane_ker'.rank := by
  apply Matrix.rank_le_of_submatrix_independent _ id (strictMono_id)
  rw [Matrix.submatrix_id_id]
  apply steane_ker_ind
\end{lstlisting}

%

The two orthogonality conditions from \Cref{sec:orth-check} are the CSS
dual-containment ($H_X$'s rows mutually orthogonal --- here against themselves)
and the parity-check-vs-kernel orthogonality from \Cref{lma41}. Both expand to
fully-concrete $\mathbb{F}_2$ dot products with no free variables, so SAT is
unnecessary; \lstinline{simp} on the \lstinline{BitVec} unfoldings suffices.

\begin{lstlisting}
lemma steane_orth : steane_mat'.mutually_orth_rows steane_mat' := by
  rw [←mutually_orth_correct, ←steane_mat_correct, steane_mat]
  simp [bitvec_mutually_orth, mutually_orth_sat_aux,
  BitVec.row, BitVec.dot_product, dot_product_aux]

lemma steane_mat_ker_orth : steane_mat'.mutually_orth_rows steane_ker' := by
  rw [←mutually_orth_correct, ←steane_mat_correct, ←steane_ker_correct, steane_mat, steane_ker]
  simp [bitvec_mutually_orth, mutually_orth_sat_aux,
  BitVec.row, BitVec.dot_product, dot_product_aux]
\end{lstlisting}

%

The core distance lemma is the SAT query proper: every error vector under the
weight bound either fails commutation with the kernel basis (i.e.\ lies in the
row space) or is the zero error. We rewrite into the location-indexed
\lstinline{BitVec} form and dispatch \lstinline{bv_decide}.

\begin{lstlisting}
set_option maxHeartbeats 0 in
lemma steane_dist : lt_dist_sat (flatten_matrix steane_mat') (flatten_matrix steane_ker') 2 3
  := by
  rw [←steane_mat_correct, ←steane_ker_correct, steane_mat, steane_ker]
  simp only [... long lemma list ...]
  bv_decide (timeout := 999)
\end{lstlisting}

%

With rank bounds and orthogonality in place, \Cref{lma41} certifies that the
externally-supplied basis is in fact a basis of $\ker H$.

\begin{lstlisting}
lemma steane_ker_mat_correct : steane_ker'.is_ker_for steane_mat' := by
  apply Matrix.is_ker_for_of_rank_sum_mutually_orth _ _ steane_mat_rank steane_ker_rank (by norm_num) steane_mat_ker_orth
\end{lstlisting}

%

The final step assembles the CSS code from its stabilizer generators and applies
the verified SAT-translation soundness theorem to lift the BSM-level UNSAT
result to a quantum-theoretic distance bound on \lstinline{steane_css}.

\begin{lstlisting}
theorem steane_dist_3 : 3 ≤ steane_css.toBSM.distance (by norm_num) := by
  apply bitvec_sat_translation_correct steane_css steane_ker' steane_ker_mat_correct steane_ker' steane_ker_mat_correct (by norm_num) _ _
  all_goals rw [steane_css, CSS_pair.of_matrices]
  all_goals exact steane_dist
\end{lstlisting}

%
%
%
%

\subsection{Beyond Steane}

\paragraph{Golay $[[23,1,7]]$.} The $[[23,1,7]]$ Golay code is another weakly
self-dual CSS code ($H_X = H_Z$), so its Lean development is, mechanically, a
rename of the Steane case study: regenerate the kernel and hexadecimal
flattening with the Python script, swap every ``Steane'' identifier for
``Golay,'' and the proofs go through unchanged. The script and the Lean file
together are under 110 lines.

\paragraph{Bivariate Bicycle $[[90,8,10]]$.} The $[[90,8,10]]$ BB code is the
first case study with two genuinely different parity-check matrices ($H_X \neq
H_Z$). Two distance queries are needed, one for $d_X$ and one for $d_Z$. The
parity-check matrices also have $4$ dependent rows each, so the rank-bound step
from \Cref{sec:rank-check} cannot collapse to row-independence as in the Steane
case; instead, the Python preprocessor selects an independent row subset to feed
into the SAT-based independence checker. Every preprocessor output is re-checked
inside Lean, so although the preprocessing happens outside the kernel, the
resulting distance certificate remains end-to-end. The main proof file is just
under 200 lines. We autogenerate similar proofs for BB codes of size 18, 30, 36, 54, 60, 72, and 108 with the same format using a Python script. 

\paragraph{$[[144,12,12]]$ outside Lean.} The 144-qubit BB instance pushes
Lean's certificate-replay budget beyond what we can verify in the kernel. Under
the location-indexed encoding with symmetry breaking of \Cref{sec:translation}, however, an external call to the SMT solver cvc5 dispatches this goal in minutes. We report
this number to show that the SAT encoding itself scales, and we view kernel-side
replay at this size as the next concrete engineering target (cf.\
\Cref{sec:future}).

\paragraph{Template summary.} The three Lean case studies share a common
skeleton: matrix entry, \lstinline{BitVec}-flattening equivalence, kernel basis
entry, rank/independence via SAT, mutual orthogonality, distance SAT query, and
the final lift through \lstinline{bitvec_sat_translation_correct}. Adding a new
CSS code requires only its parity-check matrices and a Python preprocessor
invocation. We provide a python script that generates a Lean file verifying the distance of a CSS code given only a NumPy description of the code stabilizers, demonstrating the flexibility of this template.

\section{Discussion and Conclusion}\label{sec:Conclusion and Discussion}

%
%

We presented \textsc{Lean-QEC}, a Lean 4 framework that combines interactive
theorem proving with SAT-based computation to deliver end-to-end,
machine-checked distance certificates for practical CSS and Bivariate Bicycle
codes. The library contributes a reusable stabilizer-code stack, a verified
reduction from code distance to Boolean satisfiability, two scalability
techniques (\lstinline{BitVec} flattening and location-indexed error encoding),
and case studies covering the Steane, Golay, and a family of Bivariate Bicycle codes up to $[[90, 8, 10]]$ inside
the kernel and the $[[144,12,12]]$ instance outside it.

The result narrows the gap that has separated the testing-only and solver-only
options for QEC verification. Every step from the quantum-theoretic definition
of distance, through the binary symplectic representation, through the
\lstinline{BitVec} encoding, to the final SAT certificate is connected by a
Lean-checked theorem. Either the kernel accepts the artifact or it does not; no
informal mathematical glue sits in between.

\subsection{Methodological Takeaways}\label{sec:method}

%
%
%
%
%

The build process surfaced two design lessons that we expect to transfer to
future quantum formalizations.

\paragraph{Representation bridges are first-class objects.} The binary
symplectic representation is the obvious algebraic bridge between the
quantum-theoretic stabilizer code and the matrix-level SAT query. The
non-obvious step is that this bridge must itself be a formalized object, not an
informal correspondence, if the resulting certificate is to be end-to-end. We
further bridged the BSR to a \lstinline{BitVec} representation tuned for the
solver. Each bridge is one Lean theorem and, once written, is invisible at the
call site; without it, every downstream lemma would carry an unverified
soundness gap.

\paragraph{Verified linear algebra is the missing piece.} The pipeline still
admits one externally computed input: a kernel basis for each parity-check
matrix. We discharge its correctness via a rank/orthogonality argument, but the
kernel itself is supplied as a definition rather than computed. A verified
Gaussian-elimination implementation in Lean would let the framework take
parity-check matrices alone as input. More broadly, the absence of efficient
verified algebraic computation is the consistent friction point in this
development; closing it would benefit many Lean formalizations beyond QEC.

\subsection{Related Work}\label{sec:related}

%
%

\paragraph{Quantum verification in proof assistants.} The SQIR
development~\cite{hietala2021} and CoqQ~\cite{coqq} formalize quantum circuits
and circuit optimizations in Rocq, and Peng et
al.~\cite{peng2022formallycertifiedendtoendimplementation} verify end-to-end
correctness of algorithms such as Shor's. QHLProver~\cite{liu2019} formalizes
quantum Hoare logic. On the Lean side,
Lean-QuantumInfo~\cite{meiburg2025formalization} formalizes finite-dimensional
quantum-information primitives but does not specialize to qubit systems or QEC;
we build on the same Mathlib foundations but specialize to the qubit space and
develop the stabilizer-code stack.

\paragraph{Verification of QEC.} The 9-qubit Shor-code verification
of~\cite{feng-9-qubits} closes the trust loop on a single small code but
enumerates Pauli errors explicitly and does not scale beyond it. Other
verified-QEC efforts~\cite{huang2025efficient, wu2021qecv} sit outside ITPs and
target program-level properties rather than code-distance certification. To our
knowledge, \textsc{Lean-QEC} is the first development that combines a
kernel-checked stabilizer stack with a solver-backed distance pipeline that
scales to industrial code sizes.

\paragraph{Solver-assisted ITP.} The strategy of reducing a hard subgoal to SAT,
dispatching it to an external solver, and reconstructing the certificate inside
the kernel is well established in classical verification. Our contribution to
this idiom is the verified reduction from a quantum-theoretic distance statement
to a SAT formula, the \lstinline{BitVec}-flattened encoding that makes the
formula tractable to instantiate, and the location-indexed variable scheme that
makes the formula tractable to solve.

Concurrent work by Jain~\cite{qeclean} also explores formalization of quantum
error-correcting codes in Lean but focuses on the toric code family.

\subsection{Limitations and Future Work}\label{sec:future}

\paragraph{Solver scaling.}
The SAT pipeline scales by orders of magnitude over hand-written ITP distance
proofs --- our 144-qubit instance handles roughly $2^{40}$ candidate errors ---
but it is not unbounded. Codes substantially larger than the BB family
($\mathrm{BB}_{360}$ and beyond) push the SAT query itself out of reach.
Decomposition strategies and specialized solvers for
linear codes are natural directions for extending the envelope.

\paragraph{Broader code families.}
The development targets CSS codes, with BB as the headline family. Extending to
general non-CSS stabilizer codes is mostly a matter of dropping the $X$/$Z$
split and reusing the full BSR pipeline; subsystem codes and codes from the
broader stabilizer formalism require more substantial additions. Note that
surface and color codes are themselves CSS and should fit the existing framework
directly.

\paragraph{Verified kernel computation.}
The kernel basis for each parity-check matrix is currently produced by an
external Python script and certified inside Lean. A verified row-reduction
routine would close this remaining preprocessor step. Doing so cleanly is the
most concrete piece of follow-on engineering implied by this work.

\paragraph{Integration with quantum-circuit verification.}
The QEC layer is one component of the verified quantum-computation stack.
Connecting it to verified circuit compilers such as SQIR would yield an
integrated, end-to-end pipeline from algorithm specification to fault-tolerant
physical implementation.

\section{Acknowledgements and Code Availability}
\label{sec:acknowledgements}

We thank Junyi Liu for many useful discussions.
We thank our AI assistant Aristotle \cite{achim2025aristotle} for its contributions to the repository. YL and XW are partially supported by the U.S. National Science Foundation grant CCF-1942837 (CAREER), CCF-2330974, and a Sloan Research Fellowship. 

The library may be found at \hyperlink{https://github.com/VerifiedQC/Lean-QEC}{https://github.com/VerifiedQC/Lean-QEC}. The verified BB codes are located within the library at LeanQEC/Stabilizer/Examples/BB.

\bibliographystyle{plain}
\bibliography{references}

\end{document}